\documentclass{article}
\usepackage{amssymb}

\usepackage{makeidx}
\usepackage{graphicx}
\usepackage{amsmath,mathrsfs}

\newtheorem{theorem}{Theorem}

\newtheorem{corollary}[theorem]{Corollary}

\newtheorem{definition}[theorem]{Definition}

\newtheorem{lemma}[theorem]{Lemma}

\newtheorem{proposition}[theorem]{Proposition}

\newenvironment{proof}[1][Proof]{\textbf{#1.} }{\ \rule{0.5em}{0.5em}}

\begin{document}

\title{Conditioning of Quantum Open Systems}
\author{John Gough\\
Aberystwyth University\\    
SY23 3BZ, Aberystwyth, Wales, UK\\
E-mail: jug@aber.ac.uk}
\date{}
\maketitle

\begin{abstract}
The underlying probabilistic theory for quantum mechanics is non-Kolmogorovian. The order in which physical observables will be important
if they are incompatible (non-commuting). In particular, the notion of conditioning needs to be handled with care and may not even exist in some cases. Here we layout the quantum probabilistic formulation in terms of von Neumann algebras, and outline conditions (non-demolition properties) under which filtering may occur.
\end{abstract}

\section{Introduction}
The mathematical theory of quantum probability (QP) is an extension of the usual Kolmogorov probability to the setting inspired by quantum theory \cite{Maassen,AFL,HP84,Par92}. In this article, we will emphasize the quantum analogues to events (projections), random variables (operators), sigma-algebras (von Neumann algebras), probabilities (states), etc.  
The departure from Kolmogorov's theory is already implicit in the fact that
the quantum random variables do not commute. It is advantageous to set up a
non-commutative theory of probability and one of the requirements is that
Kolmogorov's theory is contained as the special case when we restrict to
commuting observables. The natural analogue of measure theory for operators
is the setting of von Neumann algebras \cite{KR}. 

\section{Keywords}
Quantum Filter, von Neumann algebra, quantum probability, Bell's Theorem

\section{Classical Probability}

The standard approach to probability in the classical world is to associate
with each model a Kolmogorov triple $\left( \Omega ,\mathcal{F},\mathbb{P}%
\right) $ comprising a measurable space $\Omega $, a sigma-algebra, $%
\mathcal{F}$, of subsets of $\Omega $, and probability measure $\mathbb{P}$
on the sigma-algebra. $\Omega $ covers the entirety of all possible things
that may enfold in our model, the elements of $\mathcal{F}$ are
distinguished subsets of $\Omega $ referred to as events, and $\mathbb{P}%
\left[ A\right] $ is the probability of event $A$ occurring. From a practical point of view we need to frame the model in terms
of what we may hope to observe, and these constitute the ``events'', $\mathcal{F}$, and from a mathematical point of view, restricting to a sigma-algebra
clarifies all the ambiguities, while also resolving all the technical
issues, pathologies, etc., that would otherwise plague the subject.

A \textbf{random variable} is then defined as a function on $\Omega $ to a value
space, another measurable space $\left( \Gamma ,\mathcal{G}\right) $: so, for each $G\in \mathcal{G}$, the set $X^{-1}\left[
G\right] =\left\{ \omega \in \Omega :X\left( \omega \right) \in G\right\} $
is an event, i.e. in $\mathcal{F}$.  The probability that $X$ takes a value in $G$ is then $%
\mathbb{K}_{X}\left[ G\right] =\mathbb{P}\left[ X^{-1}\left[ G\right] \right]
$ which produces the distribution of $X$ as $\mathbb{K}_{X}=\mathbb{P\circ X}^{-1}$.
Let $X_{1},\cdots ,X_{n}$ be random variables
then their joint probabilities are also well-defined: 
\begin{equation}
\mathbb{K}_{X_{1},\cdots ,X_{n}}\left[ G_{1},\cdots ,G_{n}\right] =\mathbb{P}%
\left[ X_{1}^{-1}\left[ G_{1}\right] \cap \cdots \cap X_{n}^{-1}\left[ G_{n}%
\right] \right] .
\end{equation}

Let $\mathcal{A}$ be a sub-sigma algebra of $\mathcal{F}$. The collection of bounded $\mathcal{A}$-measurable (complex-valued) functions will be denoted as $\mathscr{A}=L^{\infty }(\Omega ,\mathcal{A})$. This is an example of a *-algebra of functions.
We now show that there is a natural identification between $\sigma$-algebras of subsets of $\Omega$ and *-algebras of functions on $\Omega$. First we need some definitions.

\begin{definition}
\index{Positive bounded monotone sequence} A sequence $\left( f_{n}\right)
_{n}$ of functions on $\Omega$ is said to be be a \textbf{positive bounded monotone
sequence} if there exists a finite constant $c>0$ such that $0\leq f_{n}\leq
c$ and $f_{n}\leq f_{n+1}$ for each $n$.
A *-algebra of functions $\mathscr{A}$ on $\Omega$ is said to be \textbf{monotone class} if every positive bounded monotone sequence in $\mathscr{A}$ has its limit in $\mathscr{A}$.
\end{definition}

The next result can be found in Protter \cite{Protter}.

\begin{theorem}[Monotone Class Theorem]
\index{Theorem!Monotone Class}
Given a monotone class *-algebra of functions, $\mathscr{A}$, on $\Omega$. Then $\mathscr{A}=L^\infty ( \Omega , \mathcal{A})$ where $\mathcal{A}$ is precisely the $\sigma$-algebra generated by $\mathscr{A}$ itself.
\end{theorem}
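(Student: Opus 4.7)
The plan is to pass between functions and sets via indicator functions, exploiting both the *-algebra structure and closure under bounded monotone limits. First observe that the inclusion $\mathscr{A} \subseteq L^\infty(\Omega,\mathcal{A})$ is immediate, since every member of $\mathscr{A}$ is bounded and measurable with respect to any $\sigma$-algebra rendering $\mathscr{A}$ measurable, and $\mathcal{A}$ is defined as the smallest such. The entire content of the theorem is therefore the reverse inclusion $L^\infty(\Omega,\mathcal{A}) \subseteq \mathscr{A}$.

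The bridge between functions and sets is the collection $\mathcal{A}_0 = \{A \subseteq \Omega : \mathbf{1}_A \in \mathscr{A}\}$. I would first verify that $\mathcal{A}_0$ is a $\sigma$-algebra: closure under complements from $\mathbf{1}_{A^c} = 1 - \mathbf{1}_A$ (taking $1 \in \mathscr{A}$ as part of the unital convention for the *-algebra), closure under finite intersections from $\mathbf{1}_{A \cap B} = \mathbf{1}_A \mathbf{1}_B$, and closure under countable increasing unions directly from the monotone class property applied to the sequence $\mathbf{1}_{A_1 \cup \cdots \cup A_n}$.

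The main technical obstacle is to populate $\mathcal{A}_0$ with enough sets — specifically, all level sets $\{f > \lambda\}$ for $f \in \mathscr{A}$ real and $\lambda \in \mathbb{R}$. For this I need $\mathscr{A}$ to be closed under lattice operations, and the one genuinely analytic step is closure under $f \mapsto |f|$. Since $|f|^2 = f\bar f \in \mathscr{A}$, it suffices to show $\sqrt{g} \in \mathscr{A}$ for positive $g \in \mathscr{A}$ bounded by a constant. I would use the Newton-type recursion $p_0 = 0$, $p_{n+1}(t) = p_n(t) + \tfrac{1}{2}\bigl(t - p_n(t)^2\bigr)$, which converges monotonically upward to $\sqrt{t}$ on $[0,1]$; a simple induction shows $0 \leq p_n \leq \sqrt{t}$, with the upper bound following from $2 \geq \sqrt{t} + p_n$. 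After rescaling and evaluating on $g$, this yields a positive bounded monotone sequence in $\mathscr{A}$ whose limit is $\sqrt{g}$. From $|f| \in \mathscr{A}$ one builds $f^{\pm}$, $f \vee g$ and $f \wedge g$, and finally $\mathbf{1}_{\{f > \lambda\}} = \lim_n\bigl(1 \wedge n(f - \lambda)^+\bigr)$, another positive bounded monotone sequence lying in $\mathscr{A}$.

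With $\mathcal{A}_0$ now containing every level set of every $f \in \mathscr{A}$, it contains the $\sigma$-algebra generated by $\mathscr{A}$; the reverse inclusion being trivial, $\mathcal{A}_0 = \mathcal{A}$. To close, take any $h \in L^\infty(\Omega,\mathcal{A})$; splitting into real and imaginary parts and then into positive and negative parts reduces matters to $h \geq 0$ bounded. Standard simple-function approximation yields $\mathcal{A}$-measurable simple functions $s_n \uparrow h$, each a finite real-linear combination of indicators of sets in $\mathcal{A}_0$ and therefore in $\mathscr{A}$. The monotone class property then delivers $h \in \mathscr{A}$, completing the argument.
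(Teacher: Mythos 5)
The paper does not actually prove this theorem --- it simply cites Protter --- so there is no in-text argument to compare against; your proposal supplies a complete, self-contained proof, and it is correct. The route you take is the standard one for the functional monotone class theorem: pass to the set system $\mathcal{A}_0=\{A:\mathbf{1}_A\in\mathscr{A}\}$, show it is a $\sigma$-algebra using the algebra operations plus monotone limits, populate it with level sets via the lattice operations, and finish by simple-function approximation. The one genuinely delicate step, closure under $f\mapsto|f|$, you handle correctly with the Newton--Raphson iteration $p_{n+1}=p_n+\tfrac12(t-p_n^2)$, whose monotone convergence to $\sqrt{t}$ on $[0,1]$ you justify by exactly the right induction ($p_n\le\sqrt{t}$ and $\sqrt{t}+p_n\le 2$); note that $p_n$ has zero constant term, so this step does not even need the unit. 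Two hypotheses are implicit in the paper's statement and you are right to surface them: the elements of $\mathscr{A}$ must be bounded functions (otherwise $\mathscr{A}\subseteq L^\infty(\Omega,\mathcal{A})$ fails), and $\mathscr{A}$ must contain the constants (you need $1-\mathbf{1}_A$ for complements and $f-\lambda$ for level sets; without a unit the conclusion $\mathscr{A}=L^\infty(\Omega,\mathcal{A})$ is false, e.g.\ for the functions vanishing at a point of a two-point $\Omega$). With those conventions stated, the argument is airtight, and it has the advantage over the paper's bare citation of making visible exactly where the monotone class property is invoked --- in the countable unions, in the square-root limit, in the indicator limit $1\wedge n(f-\lambda)^+$, and in the final simple-function approximation --- which is precisely the structure that the operator-algebraic analogue (the van Handel characterization of von Neumann algebras quoted immediately afterwards) mimics.
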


Conditional probabilities are natural defined: the probability that event $A$ occurs
given that event $B$ occurs is 
\begin{equation}
\mathbb{P}\left[ A|B\right] \triangleq \frac{\mathbb{P}\left[ A\cap B\right] 
}{\mathbb{P}\left[ B\right] }.
\end{equation}
This is a simple ``renormalization'' of the probability: one restricts the
outcomes in $A$ which also lie in $B$, weighted as a proportion out of all $%
B $, rather than $\Omega $.

\subsection{Quantum Probability Models}

The standard presentation of quantum mechanics takes physical quantities
(observables) to be self-adjoint operators on a fixed Hilbert space $%
\mathfrak{h}$ of wavefunctions. The normalized elements of $\mathfrak{h}$
are the \textbf{pure states}, and the expectation of an observable $X$ for a
pure state $\psi \in \mathfrak{h}$ is given $\mathbb{E}\left[ X\right] =\langle \psi |\,X\psi \rangle $.
As such, observables play the role of random variables. More generally, we
encounter quantum expectations of the form 
\begin{equation}
\mathbb{E}\left[ X\right] =tr\left\{ \rho X\right\} 
\label{eq:density_matrix_expectation}
\end{equation}
where $\rho \geq 0$ is a trace-class operator normalized so that $tr\left\{
\rho \right\} =1$. The operator $\rho $ is called a \textbf{density matrix}
and in the pure case corresponds to $\rho =|\psi \rangle \langle \psi |$.

\begin{definition}
A \textbf{quantum probability space} $\left( \mathfrak{A},\mathbb{E}\right) $
consists of a von Neumann algebra $\mathfrak{A}$ and a state $\mathbb{E}$
(assumed to be continuous in the normal topology). 
\end{definition}

When $\mathfrak{A}$ is commutative, then the quantum
probability space is isomorphic to a Kolmogorov model. 
Let us motivate now why von Neumann algebras are the appropriate object. Positive operators are well defined, and we may say $X \le Y$ if $Y-X \ge0$. In particular, the concept of a positive bounded
monotone sequence of operators makes sense, as does a monotone class algebra of operators. 

\begin{theorem}[van Handel \cite{vH_thesis}]
A collection of bounded operators over a fixed Hilbert space, is a von Neumann algebra if and only if it is a monotone class *-algebra.
\end{theorem}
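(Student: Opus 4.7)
The approach is to prove the two implications separately; the ``vN algebra $\Rightarrow$ monotone class *-algebra'' direction is routine, while the converse carries all the content. For the easy direction, let $(X_n)$ be a positive bounded monotone sequence in a vN algebra $\mathfrak{A}$, with $0 \le X_n \le X_{n+1} \le c\mathbf{1}$. For each $\psi \in \mathfrak{h}$ the real sequence $\langle \psi , X_n \psi\rangle$ is monotone and bounded, hence convergent. Polarisation and the uniform bound $\|X_n\|\le c$ produce a bounded sesquilinear form, and hence a bounded operator $X$ with $X_n\to X$ in the weak operator topology. Since $\mathfrak{A}$ is WOT-closed, $X\in\mathfrak{A}$.

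For the substantive direction, let $\mathscr{A}$ be a monotone class *-algebra; the plan is to show $\mathscr{A}=\mathscr{A}''$. First, for each self-adjoint $X\in\mathscr{A}$ with $\|X\|\le 1$, I would construct $|X|$ inside $\mathscr{A}$ via the polynomial recursion $Y_0=0$, $Y_{n+1}=Y_n+\tfrac12(\mathbf{1}-X^2-Y_n^2)$, which produces a positive bounded monotone sequence of polynomials in $X$ increasing to $\mathbf{1}-|X|$; the monotone class property then places $|X|$ in $\mathscr{A}$, whence the positive and negative parts $X_\pm=\tfrac12(|X|\pm X)$ lie in $\mathscr{A}$, and in particular $\mathscr{A}$ is closed under the order-lattice operations on its self-adjoint part via $Y\wedge c\mathbf{1}=Y-(Y-c\mathbf{1})_+$. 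Next, for any positive $Y\in\mathscr{A}$, the operators $(nY)\wedge\mathbf{1}$ form a positive bounded monotone sequence in $\mathscr{A}$ converging to the support projection $E_Y((0,\infty))$; translating by $\lambda\mathbf{1}$ shows that $\mathscr{A}$ contains every spectral projection of every self-adjoint element of $\mathscr{A}$.

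Finally, I would invoke the spectral theorem to realise every bounded Borel function of a self-adjoint element of $\mathscr{A}$ as a monotone limit of linear combinations of such projections, hence inside $\mathscr{A}$. Since $\mathscr{A}'$ is determined by the projection lattice of $\mathscr{A}$, the bicommutant $\mathscr{A}''$ coincides with the vN algebra generated by that lattice; a sequential Kaplansky-density-type argument then exhibits each self-adjoint $Y\in\mathscr{A}''$ as the monotone limit of simple functions of projections already in $\mathscr{A}$, forcing $Y\in\mathscr{A}$.

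The main obstacle is this last step: upgrading the fact that $\mathscr{A}$ contains all spectral projections of its self-adjoint elements into the equality $\mathscr{A}=\mathscr{A}''$. Monotone class is a sequential condition, whereas WOT-closure is intrinsically about nets, so one has to leverage the spectral theorem to produce a canonical countable and monotone approximation, from within $\mathscr{A}$, of any candidate element of $\mathscr{A}''$. The polynomial construction of $|X|$ in the first step is the classical technical trick that unlocks the whole chain of implications; without it, spectral projections would be inaccessible from the combination of *-algebra structure and monotone sequential limits alone.
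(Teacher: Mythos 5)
The paper states this theorem without proof, citing van Handel's thesis, so there is no internal proof to compare against; I can only assess your sketch on its merits. The easy direction is fine (a bounded increasing sequence of positive operators converges strongly to its supremum, and a von Neumann algebra is closed in the weak operator topology), and your middle portion --- square-root iteration, positive/negative parts, lattice operations, $(nY)\wedge \mathbf{1}$ increasing to the support projection --- is indeed the standard route to showing that a monotone class *-algebra contains all spectral projections of its self-adjoint elements. Two small corrections there: the recursion $Y_{n+1}=Y_n+\tfrac12(\mathbf{1}-X^2-Y_n^2)$ has fixed point $\sqrt{\mathbf{1}-X^2}$, not $\mathbf{1}-|X|$; the iteration you want is $Y_{n+1}=Y_n+\tfrac12(X^2-Y_n^2)$, which increases to $|X|$. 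And the whole construction presumes $\mathbf{1}\in\mathscr{A}$, which should be made explicit.

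The genuine gap is exactly where you locate it, and the argument you gesture at does not close it. First, to invoke Kaplansky density at all you need $\mathscr{A}$ to be a C*-algebra, i.e.\ norm closed; this does follow from the monotone class property (if $\|A_{n+1}-A_n\|\le 2^{-n}$ with each $A_n$ self-adjoint, then $A_n-2^{-n+1}\mathbf{1}$ is an increasing sequence with the same norm limit), but you never establish it. Second, and more seriously, converting strong density of the unit ball into actual membership of $\mathscr{A}$ via \emph{sequences} requires Pedersen's up--down theorem, which is only valid when the Hilbert space is separable (so that the strong topology is metrizable on bounded sets); a single monotone sequence of simple functions of projections from $\mathscr{A}$ will not in general reach an arbitrary self-adjoint element of $\mathscr{A}''$ --- one needs iterated monotone limits, carefully orchestrated. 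Indeed, with the purely sequential definition of ``monotone class'' used in this paper, the statement is false on non-separable spaces: on $\ell^2(\Gamma)$ with $\Gamma$ uncountable, the multiplication operators by bounded functions that are constant off a countable set form a unital, norm-closed, sequentially monotone-closed *-algebra whose bicommutant is all of $\ell^\infty(\Gamma)$. So either separability must be assumed, or the monotone class condition must be phrased with nets (in which case Kadison's characterization of von Neumann algebras as monotone-closed concrete C*-algebras finishes the job). As written, your final paragraph restates the difficulty rather than resolving the only nontrivial implication.
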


Specifically, we see that this recovers the usual monotone class Theorem when we further impose commutativity of the algebra.
A state on a von Neumann algebra, $\mathfrak{A}$, is then a normalized
positive linear map from $\mathfrak{A}$ to the complex numbers, that is, 
$\mathbb{E}[\hbox{1\kern-4truept 1}]=1,\quad \mathbb{E}[\alpha A+\beta
B]=\alpha \,\mathbb{E}[A]+\beta \,\mathbb{E}[B],\quad \mathbb{E}[X]\geq 0$, whenever $X\geq 0$.
Note that if $(X_n)$ is a positive bounded monotone sequence with limit $X$, then $\mathbb{E}[X_n]$ converges to $\mathbb{E}[X]$ - this in fact equivalent to the condition of continuity in the normal topology, and implies that $\mathbb{E}$ to take the form (\ref{eq:density_matrix_expectation}), for some density
matrix $\rho $, \cite{KR}.

\begin{definition}
An observable is referred to as a \textbf{quantum event} if it is an
orthogonal projection. If a quantum event $A$ corresponds to a projection $%
P_{A}$ then its probability of occurring is $\Pr \left\{ A\right\} =tr\left\{ \rho P _{A}\right\} $.
\end{definition}

The requirement that $P_{A}$ is an orthogonal projection means that $%
P_{A}^{2}=P_{A}=P_{A}^{\dag }$. The complement to the event $A$, that is 
\textit{not} $A$, will be denoted as $\bar{A}$, and is the orthogonal
projection given by the orthocomplement $P_{\bar{A}}=\hbox{1\kern-4truept 1}%
-P_{A}$, where $\hbox{1\kern-4truept 1}$ is the identity operator on $%
\mathfrak{h}$. A von Neumann algebra is a subalgebra of the bounded
operators on $\mathfrak{h}$ with good closure properties: crucially it will
be generated by its projections. So the von Neumann algebra generated by a
collection of quantum events is the natural analogue of a sigma-algebra of
classical events.

However, the new theory of quantum probability will have features not
present classically. For instance, the notion of a pair of events, $A$ and $%
B $, occurring jointly is not generally meaningful. In fact, $P _{A}P _{B}$
is in general not self-adjoint, and so does not correspond to a quantum
event. We therefore cannot interpret $tr\{\rho P _{A}P _{B}\}$ as the joint
probability for quantum events $A$ and $B$ to occur.

\begin{proposition}
The product $P_A P_B$ is an event (that is, an orthogonal projection) if and
only if $P_A$ and $P_B$ commute.
\end{proposition}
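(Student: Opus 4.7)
The plan is to prove the two implications separately, with the backward direction being essentially immediate from self-adjointness, and the forward direction a short algebraic verification using idempotence.

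First I would dispatch the ``if'' direction. Assume $P_A P_B = P_B P_A$. Then $(P_A P_B)^\dagger = P_B^\dagger P_A^\dagger = P_B P_A = P_A P_B$, so the product is self-adjoint. Idempotence follows from
\begin{equation}
(P_A P_B)^2 = P_A P_B P_A P_B = P_A P_A P_B P_B = P_A^2 P_B^2 = P_A P_B,
\end{equation}
where the second equality is commutativity and the last uses that each $P_A, P_B$ is a projection. Hence $P_A P_B$ is an orthogonal projection, i.e.\ a quantum event.

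For the ``only if'' direction, suppose $P_A P_B$ is an orthogonal projection. Then in particular it is self-adjoint, so
\begin{equation}
P_A P_B = (P_A P_B)^\dagger = P_B^\dagger P_A^\dagger = P_B P_A,
\end{equation}
which is exactly commutativity. Note that idempotence is not even needed here: self-adjointness of $P_A P_B$ alone already forces $[P_A,P_B]=0$.

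There is really no substantive obstacle: both implications are purely algebraic manipulations using $P^2 = P = P^\dagger$. The conceptual point worth highlighting in the write-up is the asymmetry between the two directions, namely that commutativity is equivalent to the mere self-adjointness of the product, and idempotence then comes for free. This is the reason the proposition is the correct no-go statement for interpreting $\operatorname{tr}\{\rho P_A P_B\}$ as a joint probability in the non-commuting case.
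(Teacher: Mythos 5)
Your proof is correct and follows essentially the same route as the paper's: commutativity is extracted from self-adjointness of the product alone, and idempotence is then verified by the same computation $(P_A P_B)^2 = P_A^2 P_B^2 = P_A P_B$. The observation you highlight --- that self-adjointness already forces $[P_A,P_B]=0$ --- is precisely the point the paper's proof makes as well.
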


\begin{proof}
Self-adjointness of $P_A P_B$ is enough to give the commutativity since then 
$P_A P_B =( P_A P_B )^\ast =P_B P_A$ as both $P_A$ and $P_B$ are
self-adjoint. We then see that $(P_A P_B )^2 = P_A P_B P_A P_B = P_A^2 P_B^2
= P_A P_B$.
\end{proof}

As such, given a pair of quantum events $A$ and $B$, it is usually
meaningless to speak of their joint probability $\Pr \{ A,B \}$ for a fixed
state $\rho$. An exception is made when the corresponding projectors commute
in which case we say the events are compatible and take the probability to
be $\Pr \{ A,B \} \equiv tr \{ \rho P_A P_B \}$.

By the spectral theorem, every observable may be written as 
\begin{equation}
X=\int_{-\infty }^{\infty }x\,P_{X}[dx]
\end{equation}
where $P_{X}[dx]$ is a projection valued measure, normalized so that $P_{X}%
\left[ \mathbb{R}\right] =1$, the identity operator. The measure is
supported on the spectrum of $X$ which is, of course real by
self-adjointness. In particular, if $G_{1}$ and $G_{2}$ are non-overlapping
Borel subsets of $\mathbb{R}$ then $P_{X}[G_{1}]$ and $P_{X}[G_{2}]$ project
onto mutually orthogonal projections.

The orthogonal projection $P_{X}[G]$ then corresponds to the quantum event
that $X$ is measured to have a value in the subset $G$. The smallest von
Neumann algebra containing all the projections $P_{X}[G]$ will be denoted as 
$\mathfrak{F}_{X}$ and plays an analogous role to the sigma-algebra
generated by a random variable.

Once we fix the density matrix $\rho$, the spectral decomposition leads to
the probability distribution of observable $X$: 
$\mathbb{K}_{X}\left[ dx\right] =tr\left\{ \rho P _{X}\left[ dx\right]
\right\} $.
We say that observables $X_{1},\cdots ,X_{n}$ are \textbf{compatible} if the
quantum events they generate are compatible. In this case 
\begin{eqnarray}
tr\left\{ \rho e^{i\sum_{k=1}^{n}u_{k}X_{k}}\right\} \equiv \int
e^{i\sum_{k=1}^{n}u_{k}x_{k}}\mathbb{K}_{X_{1},\cdots ,X_{n}}\left[
dx_{1},\cdots dx_{n}\right] ,
\end{eqnarray}
where $\mathbb{K}_{X_{1},\cdots ,X_{n}}\left[ dx_{1},\cdots dx_{n}\right] $
defines a probability measure on the Borel sets of $\mathbb{R}^{n}$. This
may be no longer true if we drop the compatibility assumption!

We remark that, given a collection of observables $X_{1},\cdots ,X_{n}$, we
can construct the smallest von Neumann algebra containing all their
individual quantum events; this will typically a non-commutative algebra and
effectively plays the role of a sigma algebra generated by random variables.

Positivity preserving measurable mappings are the natural morphisms between
Kolmogorov spaces. The situation in quantum probability is rather more
prescriptive.

First note that if $\mathfrak{A}$ and $\mathfrak{B}$ are von Neumann
algebras, then so too is their formal tensor product. \ A map $\Phi :%
\mathfrak{F}\rightarrow \mathfrak{F}$ between von Neumann algebras is
positive is $\Phi \left( A\right) \geq 0$ whenever $A\geq 0$. However, we
need a stronger condition. The mapping has the extension $\tilde{\Phi}_{%
\mathfrak{F}}:\mathfrak{A}\otimes \mathfrak{F}\rightarrow \mathfrak{B}\otimes %
\mathfrak{F}$ for a given von Neumann algebra by $\tilde{\Phi}_{\mathfrak{F}%
}\left( A\otimes F\right) =\Phi \left( A\right) \otimes F$. We say that $%
\Phi $ is \textbf{completely positive (CP)} if $\tilde{\Phi}_{\mathfrak{F}}$
is positive for any $\mathfrak{F}$.

A \textbf{morphism between a pair of quantum probability spaces} \cite{Maassen}, $\Phi : (\mathfrak{A}_1, \mathbb{E}_1) \mapsto (\mathfrak{A}_2, \mathbb{E}_2)$ is a
completely positive map with the properties $\Phi( \hbox{1\kern-4truept 1}_{\mathfrak{A}_1} )
=\hbox{1\kern-4truept 1}_{\mathfrak{A}_2}$ and $\mathbb{E}_2 \circ \Phi = \mathbb{E}_1$.
Despite its rather trivial looking appearance, the CP property is
actually quite restrictive.

\section{Quantum Conditioning}

The conditional probability of event $A$ occurring given that $B$ occurs is
defined by 
$\Pr \left\{ A|B\right\} =\frac{\Pr \left\{ A,B\right\} }{P\left\{ B\right\} }$.
In quantum probability, $\Pr \left\{ A,B\right\}$ may make sense as a joint
probability only if $A$ and $B$ are compatible, otherwise there is the
restriction that $A$ is measured before $B$.

Let $X$ be an observable with a discrete spectrum (eigenvalues). If we start
in a pure state $|\psi _{\text{in}}\rangle $ and measure $X$ to record a
value $x$ then this quantum event has corresponding projector $P _{X}\left(
x\right) $. Von Neumann's projection postulate states that the state after
measurement is proportional to
\begin{eqnarray}
|\psi _{\text{out}}\rangle =P _{X}\left( x\right) |\psi _{\text{in}}\rangle
\end{eqnarray}
and that the probability of this event is $\Pr \left\{ X=x\right\} =\langle
\psi _{\text{in}}|P _{X}\left( x\right) |\psi _{\text{in}}\rangle \equiv
\langle \psi _{\text{out}}|\psi _{\text{out}}\rangle $. Note that $|\psi _{%
\text{out}}\rangle $ is not normalized! 
A subsequent measurement of another discrete observable $Y$, leading to
eigenvalue $y$, will result in $|\psi _{\text{out}}\rangle =P _{Y}\left(
y\right) P _{X}\left( x\right) |\psi _{\text{in}}\rangle $ and so (ignoring
dynamics form the time being) 
\begin{eqnarray}
\Pr \left\{ X=x;Y=y\right\} =\langle \psi _{\text{out}}|\psi _{\text{out}%
}\rangle =\langle \psi _{\text{in}}|P _{Y}\left( y\right) P _{X}\left(
x\right) P _{Y}\left( y\right) |\psi _{\text{in}}\rangle .
\end{eqnarray}
This needs to be interpreted as a sequential probability - event $X=x$
occurs first, and $Y=y$ second - rather than a joint probability. If $X$ and 
$Y$ do not commute then the order in which they are measured matters as $\Pr
\left\{ X=y;Y=y\right\} $ may differ from $\Pr \left\{ Y=y;X=x\right\} $.

\begin{lemma}
Let $P$ and $Q$ be orthogonal projections then the properties $QPQ=PQP$ and $%
PQP=QP$ are equivalent to $\left[ Q,P\right] =0$.
\end{lemma}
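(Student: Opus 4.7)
The lemma claims three conditions on orthogonal projections are equivalent: (i) $QPQ = PQP$, (ii) $PQP = QP$, and (iii) $PQ = QP$. My plan is to prove a cycle of implications: (iii) implies both (i) and (ii), (ii) implies (iii), and (i) implies (iii), so that the three conditions are pairwise equivalent.

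The direction (iii) $\Rightarrow$ (i) and (iii) $\Rightarrow$ (ii) is immediate, since idempotence $P^2 = P$ and $Q^2 = Q$ collapses both $QPQ$ and $PQP$ to $QP$ as soon as $P$ and $Q$ commute.

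For (ii) $\Rightarrow$ (iii) I would use a one-line adjoint argument: the operator $PQP$ is manifestly self-adjoint, since $P$ and $Q$ are, so if $PQP = QP$ then $QP$ equals its own adjoint, which is $PQ$.

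The main obstacle is (i) $\Rightarrow$ (iii), because the adjoint trick that handled (ii) is useless here: both $QPQ$ and $PQP$ are already self-adjoint, so identity (i) yields nothing by conjugation alone. My plan is instead to consider the skew-adjoint commutator $A = PQ - QP$ and show that $A^\ast A = 0$, which forces $A = 0$ by the $C^\ast$-identity $\|A\|^2 = \|A^\ast A\|$. Expanding $A^\ast A$ produces four terms, $QPQ$, $QPQP$, $PQPQ$, and $PQP$, and the key step is that left- and right-multiplication of (i) by $P$, together with $P^2 = P$, collapses both $PQPQ$ and $QPQP$ to $PQP$. Combined with (i) itself, all four summands then equal $PQP$ and cancel in pairs, giving $A^\ast A = 0$. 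This collapse is the whole substance of the lemma.
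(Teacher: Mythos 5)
Your proof is correct, and for the substantive direction $QPQ=PQP\Rightarrow [P,Q]=0$ it is essentially the paper's argument: both compute $[P,Q]^{\ast}[P,Q]=QPQ-QPQP-PQPQ+PQP$ and show it vanishes under the hypothesis (the paper via the factorization $(Q-P)(QPQ-PQP)$, you by left/right multiplying the hypothesis by $P$ to reduce all four terms to $PQP$), then conclude from $\|A\|^{2}=\|A^{\ast}A\|$. The only genuine departure is your treatment of $PQP=QP\Rightarrow PQ=QP$ by observing that $PQP$ is self-adjoint so $QP$ must equal its adjoint $PQ$; this is cleaner than the paper's repetition of the norm computation for that case, though it buys only a small simplification.
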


\begin{proof}
We begin by noting that $\left[ P,Q\right] ^{\ast }\left[ P,Q\right] =QPQ-PQPQ+PQP-QPQP$.
This may be rewritten as $\left( Q-P\right) \left( QPQ-PQP\right) $ so if $%
QPQ=PQP$ then $\left[ P,Q\right] =0$. If we have $PQP=QP$, then $\left[ P,Q%
\right] ^{\ast }\left[ P,Q\right] \equiv QPQ-QPQ+PQP-QP=PQP-QP=0$ and so $%
\left[ P,Q\right] =0$. Conversely, if $P$ and $Q$ commute then $QPQ=PQP$ and 
$PQP=QP$ hold true.
\end{proof}

\begin{corollary}
If we have $\Pr \left\{ X=x;Y=y\right\}$ equal to $\Pr \left\{
Y=y;X=x\right\} $ for all states $|\psi _{\text{in}}\rangle $ and all
eigenvalues $x$ and $y $are equal, then $X$ and $Y$ are compatible.
\end{corollary}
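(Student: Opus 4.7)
The plan is to extract an operator identity from the equality of probabilities, apply the preceding lemma, and then bundle pairwise commutativity of eigenprojections into compatibility of the observables. First, using $|\psi_{\text{out}}\rangle = P_Y(y)P_X(x)|\psi_{\text{in}}\rangle$ together with idempotency $P_Y(y)^2 = P_Y(y)$ and self-adjointness of each projection, I would rewrite the sequential probability in symmetric form as
\[
\Pr\{X=x;Y=y\} = \langle \psi_{\text{in}} | P_X(x) P_Y(y) P_X(x) | \psi_{\text{in}}\rangle,
\]
and symmetrically $\Pr\{Y=y;X=x\} = \langle \psi_{\text{in}} | P_Y(y) P_X(x) P_Y(y) | \psi_{\text{in}}\rangle$.

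Next, I would use the fact that on a complex Hilbert space a bounded operator is determined by its quadratic form (polarization identity): if the hypothesis gives $\langle \psi_{\text{in}}|(P_X(x)P_Y(y)P_X(x) - P_Y(y)P_X(x)P_Y(y))|\psi_{\text{in}}\rangle = 0$ for every $|\psi_{\text{in}}\rangle$, then the operator difference vanishes, yielding the identity $P_X(x) P_Y(y) P_X(x) = P_Y(y) P_X(x) P_Y(y)$ for every pair of eigenvalues $x$ and $y$.

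This is precisely the hypothesis $QPQ=PQP$ of the preceding lemma, applied with $P=P_X(x)$ and $Q=P_Y(y)$, so the lemma delivers $[P_X(x), P_Y(y)] = 0$ for every $x, y$. Because $X$ and $Y$ have discrete spectra, $X = \sum_x x\,P_X(x)$ and $Y = \sum_y y\,P_Y(y)$, so pairwise commutativity of all eigenprojections implies that $\mathfrak{F}_X$ and $\mathfrak{F}_Y$ are mutually commuting von Neumann algebras; by the definition given earlier in the paper, this is exactly compatibility of $X$ and $Y$.

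The only step that requires genuine care is the polarization argument promoting the scalar identity $\langle\psi|\Delta\psi\rangle=0$ for all $|\psi\rangle$ to the operator identity $\Delta=0$; the remainder is an invocation of the preceding lemma and an unwinding of the definition of compatibility.
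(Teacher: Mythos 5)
Your proof is correct and follows essentially the same route as the paper: express both sequential probabilities as quadratic forms, deduce the operator identity $P_X(x)P_Y(y)P_X(x)=P_Y(y)P_X(x)P_Y(y)$ from their equality on all states, and invoke the preceding lemma to obtain commutativity of all spectral projections, hence compatibility. You merely make explicit the polarization step that the paper leaves implicit (and you correct a small slip in the paper's formula for $\langle\psi_{\text{out}}|\psi_{\text{out}}\rangle$, which should indeed read $\langle\psi_{\text{in}}|P_X(x)P_Y(y)P_X(x)|\psi_{\text{in}}\rangle$ when $X$ is measured first).
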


The symmetry implies that $P _{Y}\left( y\right) P _{X}\left( x\right) P
_{Y}\left( y\right)$ equals $P _{X}\left( x\right) P _{Y}\left( y\right) P
_{X}\left( x\right) $ and so the spectral projections of $X$ and $Y$ commute.

\begin{corollary}
If for all states $|\psi _{\text{in}}\rangle $ whenever we measure $X$, then 
$Y$ and then $X$ again, we always record the same value for $X$, then $X$
and $Y$ are compatible.
\end{corollary}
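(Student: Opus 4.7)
The plan is to translate the physical condition ``always record the same value for $X$'' into an operator identity and then invoke the preceding Lemma to deduce commutativity of the spectral projections.

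First, I would fix eigenvalues $x$ of $X$ and $y$ of $Y$ and compute the relevant (unnormalized) post-measurement states. After measuring $X=x$ we have $P_X(x)|\psi_{\text{in}}\rangle$; after a subsequent measurement of $Y=y$ we have $P_Y(y)P_X(x)|\psi_{\text{in}}\rangle$; after a third measurement of $X$ giving value $x'$ we have $P_X(x')P_Y(y)P_X(x)|\psi_{\text{in}}\rangle$. The conditional probability that the third measurement yields $x'$, given that the first two gave $x$ and $y$, is the ratio
\begin{equation}
\frac{\|P_X(x')P_Y(y)P_X(x)|\psi_{\text{in}}\rangle\|^2}{\|P_Y(y)P_X(x)|\psi_{\text{in}}\rangle\|^2}.
\end{equation}

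The hypothesis says this ratio equals $1$ when $x'=x$ (for every $|\psi_{\text{in}}\rangle$ for which the denominator is nonzero). Equivalently, $\|P_X(x)P_Y(y)P_X(x)|\psi_{\text{in}}\rangle\|=\|P_Y(y)P_X(x)|\psi_{\text{in}}\rangle\|$ for all $|\psi_{\text{in}}\rangle$, and since $P_X(x)P_Y(y)P_X(x)$ is obtained from $P_Y(y)P_X(x)$ by projecting with $P_X(x)$, equality of norms on all vectors forces the operator identity
\begin{equation}
P_X(x)\,P_Y(y)\,P_X(x) \;=\; P_Y(y)\,P_X(x).
\end{equation}

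With $P=P_X(x)$ and $Q=P_Y(y)$, this is precisely the hypothesis $PQP=QP$ of the Lemma, so we conclude $[P_X(x),P_Y(y)]=0$. Since $x$ and $y$ were arbitrary eigenvalues, all spectral projections of $X$ commute with all spectral projections of $Y$, which is the definition of compatibility.

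The only real subtlety, and the step I would take most care with, is the passage from the probabilistic ``same value'' hypothesis to the operator equation $PQP=QP$: one must argue that having $\|PQP\psi\|=\|QP\psi\|$ for all $\psi$ (plus the obvious relation $PQP\psi = P(QP\psi)$) forces $QP\psi$ to already lie in the range of $P$, hence $PQP = QP$ as operators. Once this is pinned down, the rest is just an invocation of the Lemma and the definition of compatibility for observables.
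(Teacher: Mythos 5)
Your proposal is correct and follows essentially the same route as the paper: translate the certainty of re-measuring $x$ into the operator identity $P_X(x)P_Y(y)P_X(x)=P_Y(y)P_X(x)$ and then invoke the preceding Lemma. Your extra care in deriving that identity from the norm equality $\|P\phi\|=\|\phi\|\Rightarrow P\phi=\phi$ is a welcome refinement of the step the paper states directly as $P_X(x)|\psi_{\text{out}}\rangle=|\psi_{\text{out}}\rangle$.
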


Setting $|\psi _{\text{out}}\rangle =P _{Y}\left( y\right) P _{X}\left(
x\right) |\psi _{\text{in}}\rangle $ we must have that $P _{X}\left(
x\right) |\psi _{\text{out}}\rangle =|\psi _{\text{out}}\rangle $, if this
is true for all $|\psi _{\text{in}}\rangle $ then $P _{X}\left( x\right) P
_{Y}\left( y\right) =P _{X}\left( x\right) P _{Y}\left( y\right) P
_{X}\left( x\right) $ which likewise implies that $\left[ P _{X}\left(
x\right) ,P _{Y}\left( y\right) \right] =0$.

\subsection{Conditioning Over Time}

The dynamics under a (possibly time-dependent) Hamiltonian $H\left( t\right) 
$ is described by the two-parameter family of unitary operators 
\begin{eqnarray}
U\left( t,s\right) =\hbox{1\kern-4truept 1}-i\int_{s}^{t}H\left( \tau
\right) U\left( \tau ,s\right) d\tau ,\quad t\geq s.
\end{eqnarray}
This is the solution to $\frac{\partial }{\partial t}U\left( t,s\right)
=-iH\left( t\right) U\left( t,s\right) $, $U\left( s^{-},s\right) =%
\hbox{1\kern-4truept 1}$. We have the \textbf{flow identity} 
\begin{eqnarray}
U\left( t_{3},t_{2}\right) U\left( t_{2},t_{1}\right) =U\left(
t_{3},t_{1}\right)  \label{eq:flow}
\end{eqnarray}
whenever $t_{3}\geq t_{2}\geq t_{1}$. In the special case where $H$ is
constant, we have $U\left( t,s\right) =e^{-i\left( t-s\right) H}$.
It is convenient to introduce the maps $\left( t_{2}>t_{1}\right) $%
\begin{eqnarray}
J_{\left( t_{1},t_{2}\right) }\left( X\right) =U\left( t_{2},t_{1}\right)
^{\ast }XU\left( t_{2},t_{1}\right)   
\end{eqnarray}
and, from the flow identity (\ref{eq:flow}), 
$J_{\left( t_{1},t_{2}\right) }\circ J_{\left( t_{2},t_{3}\right) }
=J_{\left( t_{1},t_{3}\right) }$ whenever $t_{3}>t_{2}>t_{1}$.

We now consider an experiment where we measure observables $Z_{1},\cdots
,Z_{n}$ at times $t_{1}<t_{2}<\cdots <t_{n}$ during a time interval $0$ to $%
T $. At the end of the experiment, if we measure $Z_{1}\in G_{1},\cdots
,Z_{n}\in G_{n}$, then the output state should be 
\begin{eqnarray}
|\psi _{\text{out}}\rangle =U\left( T,t_{n}\right) P _{Z_{n}}\left[ G_{n}%
\right] U\left( t_{n},t_{n-1}\right) \cdots U\left( t_{2},t_{1}\right) P
_{Z_{1}}\left[ G_{1}\right] U\left( t_{1},0\right) |\psi _{\text{in}}\rangle
.
\end{eqnarray}
It is convenient to introduce the observables 
\begin{eqnarray}
Y_{k}=J_{\left( 0,t_{k}\right) }\left( Z_{k}\right) .
\end{eqnarray}
The quantum event $Y_{k}\in G_{k}$ at time $t_k$ is then $P _{Y_{k}}\left[
G_{k}\right] =J_{\left( 0,t_{k}\right) }\left( P _{Z_{k}}\left[ G_{k}\right]
\right) $. The flow identity implies $U\left( t_{k},t_{k-1}\right) =U\left(
t_{k},0\right) U\left( t_{k-1},0\right) ^{\ast }$, and we find 
\begin{eqnarray}
|\psi _{\text{out}}\rangle &=&U\left( T,0\right) J_{\left( 0,t_{n}\right)
}\left( P _{Z_{n}}\left[ G_{n}\right] \right) \cdots J_{\left(
0,t_{n}\right) }\left( P _{Z_{1}}\left[ G_{1}\right] \right) |\psi _{\text{in%
}}\rangle  \notag \\
&=&U\left( T,0\right) P _{Y_{n}}\left[ G_{n}\right] \cdots P _{Y_{1}}\left[
G_{1}\right] |\psi _{\text{in}}\rangle .
\end{eqnarray}
We therefore have the probability 
\begin{eqnarray}
\Pr \left\{ Y_{1}\in G_{1};\cdots ;Y_{n}\in G_{n}\right\}=\langle \psi _{%
\text{out}}|\psi _{\text{out}}\rangle \nonumber\\
= \text{tr}\left\{ P _{Y_{n}}\left[ G_{n}\right] \cdots P _{Y_{1}}\left[ G_{1}\right]
\, \rho \, P _{Y_{1}}\left[ G_{1}\right] \cdots P _{Y_{n}}\left[ G_{n}\right]
\right\} ,
\label{eq:pyramid}
\end{eqnarray}
where $\rho $ is the initial state. Note that this takes the \textbf{pyramidal} form .

Here the $Z_{k}$ are understood as observables specified in the
Schr\"{o}dinger picture at time 0 - what we measure are the $Y_{k}$ which
are the $Z_{k}$ at respective times $t_{k}$. The answer depends on the
chronological order $t_{1}<t_{2}<\cdots <t_{n}$.

It is tempting to think of $\left\{ Y_{k}:k=1,\cdots ,n\right\} $ as a
discrete time stochastic process, but some caution is necessary. We cannot
generally permute the events $Y_{1}\in G_{1};\cdots ;Y_{n}\in G_{n}$ so we
do not have the symmetry usually associated with Kolmogorov's Reconstruction
Theorem.

\begin{proposition}
The finite-dimensional distributions satisfy the marginal consistency for
the most recent variable. Specifically, this means that 
\begin{eqnarray}
\sum_{G}\Pr \left\{ Y_{1}\in G_{1};\cdots ;Y_{n}\in G\right\} = \Pr \left\{
Y_{1}\in G_{1};\cdots ;Y_{n-1}\in G_{n-1}\right\}
\end{eqnarray}
where the sum is over a collectively exhaustive mutually exclusive set $%
\left\{ G\right\} $.
\end{proposition}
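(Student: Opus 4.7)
The plan is to compute the left-hand side by directly exploiting the pyramidal form (\ref{eq:pyramid}) together with two structural properties of the projection-valued measure of $Y_n$: idempotence of each $P_{Y_n}[G]$ and completeness $\sum_G P_{Y_n}[G] = \hbox{1\kern-4truept 1}$ over any collectively exhaustive mutually exclusive partition $\{G\}$.

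Concretely, I would first substitute $G_n = G$ into (\ref{eq:pyramid}) and observe that the two outermost factors in the trace are $P_{Y_n}[G]$ on the left and $P_{Y_n}[G]$ on the right, sandwiching the block
\begin{equation*}
A \;=\; P_{Y_{n-1}}[G_{n-1}] \cdots P_{Y_1}[G_1]\, \rho\, P_{Y_1}[G_1] \cdots P_{Y_{n-1}}[G_{n-1}].
\end{equation*}
Using the cyclic property of the trace, $\text{tr}\{P_{Y_n}[G]\, A\, P_{Y_n}[G]\} = \text{tr}\{P_{Y_n}[G]^2\, A\} = \text{tr}\{P_{Y_n}[G]\, A\}$, where idempotence $P_{Y_n}[G]^2 = P_{Y_n}[G]$ comes from $P_{Y_n}[G]$ being an orthogonal projection.

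Next, I would interchange the finite sum $\sum_G$ with the trace and apply $\sum_G P_{Y_n}[G] = \hbox{1\kern-4truept 1}$, which holds because the spectral projections of $Y_n$ on a partition $\{G\}$ of $\mathbb{R}$ are mutually orthogonal and sum to the identity. This collapses the outermost factor to the identity, leaving $\text{tr}\{A\}$, which is precisely $\Pr\{Y_1 \in G_1; \cdots; Y_{n-1} \in G_{n-1}\}$ according to (\ref{eq:pyramid}) applied at level $n-1$.

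There is no real obstacle here; the proof is essentially a one-line manipulation once the pyramidal expression is written down. The conceptual point worth flagging, however, is \emph{why} the same argument fails if one tries to marginalise over any intermediate $G_k$ with $k<n$: the corresponding $P_{Y_k}[G]$ appears sandwiched \emph{between} non-commuting projections, so neither the cyclic trick nor the idempotence step can be used to reduce $P_{Y_k}[G]\,(\cdots)\,P_{Y_k}[G]$ to a single factor before summing. This is precisely the asymmetry that breaks Kolmogorov-type reconstruction and motivates restricting marginal consistency to the most recent variable.
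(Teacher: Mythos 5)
Your proof is correct and follows essentially the same route as the paper's: the cyclic property of the trace plus idempotence $P_{Y_n}[G]^2 = P_{Y_n}[G]$ reduces the sandwiched expression to a single outer factor, and the completeness relation $\sum_G P_{Y_n}[G] = \hbox{1\kern-4truept 1}$ then collapses the sum. Your closing remark on why the argument fails for intermediate $Y_k$ matches the paper's own discussion of the demolition issue following the proposition.
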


\begin{proof}
We have 
\begin{eqnarray}
&&\sum_{G}\Pr \left\{ Y_{1}\in G_{1};\cdots ;Y_{n}\in G\right\} \\
&=&\sum_{G}tr\big\{ P _{Y_{n-1}}\left[ G_{n-1}\right] \cdots P _{Y_{1}}\left[
G_{1}\right] \rho P _{Y_{1}}\left[ G_{1}\right] \cdots P _{Y_{n-1}}\left[
G_{n-1}\right] P _{Y_{n}}\left[ G\right] \big\}
\end{eqnarray}
but $\sum_{G}P _{Y_{n}}\left[ G\right] \equiv \hbox{1\kern-4truept 1}$ so we
obtain the desired reduction.
\end{proof}

As an example, suppose that $\{ A_k \}$ is a collection of quantum events
occurring at a fixed time $t_1$ which are mutually exclusive (that is, their
projections project onto orthogonal subspaces). Their union $\cup_k A_k$
makes sense and corresponds to the projection onto the direct sum of these
subspaces. Now if $B$ is an event at a later time $t_2$, then typically $%
\sum_k \Pr \{ A_k ; B \} $ is not the same as $\Pr \{ \cup_k A_k ; B \}$.
The well known two-slit experiment fits into this description, with $A_k$
the event that an electron goes through slit $k$ ($k=1,2$), and $B$ the
subsequent event that it hits a detector.

Marginal consistency is a property we take for granted in classical
stochastic processes and is an essential requirement for Kolmogorov's
Reconstruction Theorem. However, in the quantum setting it is only
guaranteed to work for the last measured observable. For instance, it may
not apply to $Y_{n-1}$ unless we can commute its projections with $P _{Y_{n}}%
\left[ G_{n}\right] $. The most recent measured observable has the potential to
\textbf{demolish} all the measurements beforehand.

\subsection{Bell's Inequalities}

A Bell inequality is any constraint that
applies to classical probabilities but which may fail in quantum probability.
We look at one example due to Eugene Wigner.

\begin{proposition}[A Bell Inequality]
Given three (classical) events $A,B,C$ then we always
have 
\begin{eqnarray}
\Pr \left\{ A,C\right\} \leq \Pr \left\{ A,\overline{B}\right\} +\Pr \left\{
B,C\right\} .  \label{eq:Bell}
\end{eqnarray}
\end{proposition}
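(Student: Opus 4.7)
The plan is to reduce the inequality to elementary set-theoretic manipulations combined with the monotonicity and finite additivity of the Kolmogorov probability measure $\mathbb{P}$. The key observation is that the event $A \cap C$ can be partitioned according to whether $B$ occurs or not, so we write
\begin{equation*}
A \cap C = \bigl( A \cap C \cap B \bigr) \,\cup\, \bigl( A \cap C \cap \bar{B} \bigr),
\end{equation*}
where the two pieces on the right are disjoint.

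Next I would apply finite additivity to turn this into an equality of probabilities, and then bound each of the two terms by dropping one of the intersecting events to enlarge the set. Specifically, $A \cap C \cap B \subseteq B \cap C$ gives $\Pr\{A \cap C \cap B\} \le \Pr\{B, C\}$ by monotonicity, and similarly $A \cap C \cap \bar{B} \subseteq A \cap \bar{B}$ gives $\Pr\{A \cap C \cap \bar{B}\} \le \Pr\{A, \bar{B}\}$. Adding the two estimates yields (\ref{eq:Bell}).

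There is essentially no hard step here: the argument is a one-line set inclusion followed by monotonicity. The substantive content of the proposition is not in the proof but in the contrast with the quantum setting, where $\Pr\{A,C\}$ may only be meaningful as a sequential (``pyramidal'') probability as in (\ref{eq:pyramid}), and the decomposition $\mathbf{1} = P_B + P_{\bar B}$ of the identity inserted between non-commuting projections no longer preserves probabilities in the same way. So the only thing worth flagging in the write-up is that the proof relies critically on the fact that, classically, $A$, $B$, $C$ live in a single Boolean $\sigma$-algebra on which $\mathbb{P}$ is additive — exactly the structure that fails in the non-commutative case discussed after the two-slit example.
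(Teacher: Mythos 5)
Your proof is correct. It differs from the paper's in a small but real way: you use a single partition, $\Pr\{A,C\}=\Pr\{A,B,C\}+\Pr\{A,\bar{B},C\}$, and then finish by monotonicity ($A\cap B\cap C\subseteq B\cap C$ and $A\cap\bar{B}\cap C\subseteq A\cap\bar{B}$), whereas the paper expands all three pairwise probabilities via marginal consistency and exhibits the slack exactly as $\Pr\{A,\bar{B}\}+\Pr\{B,C\}-\Pr\{A,C\}=\Pr\{A,\bar{B},\bar{C}\}+\Pr\{\bar{A},B,C\}\geq 0$. Your route is shorter and more elementary; the paper's buys an explicit identity for the deficit, which is then used in the surrounding discussion to diagnose precisely which of the three marginal-consistency identities survive in the quantum setting (only the one summing over the chronologically last event). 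Your closing remark about the failure of Boolean additivity in the non-commutative case is exactly the right moral, and in fact your own key step --- resolving $\Pr\{A,C\}$ over $B$ versus $\bar{B}$, i.e.\ inserting $P_B+P_{\bar{B}}$ between the other two projections --- is one of the identities the paper identifies as failing quantumly, so your proof localizes the breakdown just as well, if slightly less explicitly.
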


\begin{proof}
From the marginal property we have the following three classical identities
$\Pr \left\{ A,\overline{B}\right\} =\Pr \left\{ A,\overline{B},C\right\}
+\Pr \left\{ A,\overline{B},\overline{C}\right\}$,  
$\Pr \left\{ B,C\right\} =\Pr \left\{ A, B,C\right\} +\Pr \left\{ \overline{%
A},B,C\right\} $,  and $\Pr \left\{ A,C\right\} =\Pr \left\{ A,B,C\right\} +\Pr \left\{ A,%
\overline{B},C\right\} $.  
We therefore have that 
\begin{eqnarray}
\Pr \left\{ A,\overline{B}\right\} +\Pr \left\{ B,C \right\} -\Pr \left\{
A,C\right\}= \Pr \left\{ A,\overline{B},\overline{C}\right\} +\Pr \left\{ 
\overline{A},B,C\right\} \geq 0.
\end{eqnarray}
\end{proof}

The proof relies on the fact marginal consistency is always valid for
classical events. Suppose that the $A,B,C$ are quantum events, say $Y_{k}$
taking a value in $G_{k}$ at time $t_{k}$ $\left( k=1,2,3\right) $, and
chronologically ordered $\left( t_{1}<t_{2}<t_{3}\right) $. Then only the
first of the three classical identities is guaranteed in quantum theory (marginal consistency
for the latest event at time $t_{3}$ only). The remaining two may fail, and it is
easy to construct a quantum system where inequality (\ref{eq:Bell}) is
violated.

\section{Quantum Filtering}
Given the issues raised above, one may ask whether it is actually possible to track a quantum system over time?

We shall say that the process $\left\{ Y_{k}:k=1,\cdots ,n\right\} $ is 
\textbf{essentially classical} whenever all the observables are compatible.
In this case its finite dimensional distributions satisfy all the
requirements of Kolmogorov's Theorem and so we can model it as a classical
stochastic process. The von Neumann algebra $\mathfrak{Y}_n$ they generate will be commutative,
and we have $\mathfrak{Y}_n \subset \mathfrak{Y}_{n+1}$ (a filtration of von Neumann algebras!).
 
For the tracking over time to be meaningful, we ask for the observed process to be essentially classical - this is the
\textbf{self-non-demolition property} of the observations.

Let $X_n= J_{(0,t_n)} (X)$ be an observable $X$ at time $t_n$. Suppose that it is compatible with $\mathfrak{Y_n}$ then we 
are lead to a well-defined classical joint distribution $\mathbb{K}_{X_n,Y_1, \cdots, Y_n}$ from which we may compute
the conditional probability $\mathbb{K}_{X_n|Y_1, \cdots, Y_n}$. This means that $\pi_{t_n} (X) = \mathfrak{E} [ X_n |
\mathfrak{Y}_n]$ is well-defined.

We only try to condition those observables that are compatible with the measured observables! This is known as the \textbf{non-demolition property}.

\subsection{Conditioning in Quantum Theory}
The natural analogue of conditional expectation in quantum theory would be a projective morphism (CP map) from a von Neumann algebra into a sub-von Neumann algebra. However, this does not to always exist in the noncommutative case. Given our discussions above, this is not surprising.

In general, be $\mathfrak{Y}$ be a sub-von Neumann algebra of a von Neumann algebra $\mathfrak{B}$, then its commutant is the set of all operators in $\mathfrak{B}$ which commute with each element of $\mathfrak{Y}$, that is
\[
\mathfrak{Y}^\prime = \{ X \in \mathfrak{B} : [X,Y]=0 , \, \forall \, Y \in \mathfrak{Y} \}.
\]

Why this is possible is easy to explain. The von Neumann algebra generated by $\mathfrak{Y}$ and fixed element $X\in \mathfrak{Y}^\prime$ will again be a commutative, so the conditional expectation of $X$ onto $\mathfrak{Y}$ is well defined. Physically, it means that $X\in \mathfrak{Y}^\prime$ is compatible with $\mathfrak{Y}$ so standard classical probabilistic constructs are valid.

As an illustration, suppose that the von Neumann algebra of a system of interest is $\mathfrak{A}$ and its environment's is $\mathfrak{E}$.
Let $X$ be a simple observable of the system, say with spectral decomposition $X = \sum_x x R_x$ where $\{ R_x \}$ is a set of mutually orthogonal projections. Similarly, let $Z = \sum_y y \, Q_y$ be an environment observable. We entangle the system and environment with a unitary $U$
(acting on $\mathfrak{A} \otimes \mathfrak{E}$) and measure the observable $Y = U^\ast
( \hbox{1\kern-4truept 1}_{\mathfrak{A}} \otimes Z ) U \equiv \sum_y P_y$. Here, the event corresponding to measuring eigenvalue $y$ of $Y$ is $P_y = U^\ast
( \hbox{1\kern-4truept 1}_{\mathfrak{A}} \otimes Q_y) U$. The observables $X_1 =
 U^\ast
( X \otimes \hbox{1\kern-4truept 1}_{\mathfrak{E}} ) U$ and $Y$ commute and so they have a well-defined joint probability to have eigenvalues $x$ and $y$ respectively for a fixed state $\mathbb{E}$:
$p(x,y) = \mathbb{E} \big[ U^\ast ( R_x \otimes Q_y ) U \big] $.
We may think of $X_1$ as $X$ evolved by one time step. Its conditional expectation given the measurement of $Y$ is then
$\mathbb{E} [ X_1 | \mathfrak{Y} ] 
= \sum_{x,y} x \, p(x|y) \, P_y $
where $p(x|y) = p(x,y) / p(y)$, with $p(y) = \sum_x p(x,y)$.

\section{Summary and Future Directions}
Quantum theory can be described in a systematic manner, despite the frequently sensationalist presentations on the subject.
Getting the correct mathematical setting allows one to develop an operational approach to quantum measurements and probabilities.
We described the quantum probabilistic framework which uses von Neumann algebras in place of measurable functions and shown how some of the usual concepts from Kolmogorov's theory carry over. However, we were also able to highlight which features of the classical world may fail to be valid in quantum theory.

What is of interest to control theorists is that the extraction of information from quantum measurements can be addressed and, under appropriate conditions, filtering can be formulated.
As we have seen, measuring quantum systems over time is problematic. However, the self-non-demolition property of the observations, and the non-demolition principle are conditions which guarantee that the filtering problem is well-posed. These conditions are met in continuous time quantum Markovian models (by causality, prediction turns out to be well-defined, though not necessarily smoothing!).
Explicit forms for the filter were given by Belavkin \cite{Belavkin}, see also \cite{BvHJ}.

\section{Cross References}
Article by H.I. Nurdin

\section{Recommended Reading}
The mathematical program of ``quantizing'' probability emerged in the 1970's and has produced a number of technical results used by physicists. However, it the prospect technological applications that has seen QP adopted as the natural framework for quantum analogues of engineering such as filtering and control. The philosophy of QP is given in \cite{Maassen}, for instance, with the main tool - quantum Ito calculus - in \cite{HP84,Par92}. The theory of quantum filtering was pioneered by V.P. Belavkin \cite{Belavkin} with modern accounts in
\cite{vH_thesis,BvHJ}.

\end{document}